\tikzset{%
  >={Latex[width=2mm,length=2mm]},
            base/.style = {rectangle, rounded corners, draw=black,
                           minimum width=2cm, minimum height=0.5cm,
                           text centered},
  activityStarts/.style = {base, fill=blue!30},
         process/.style = {base, fill=orange!15},
}
\newtheorem{theorem}{\bf Theorem} \newtheorem{definition}[theorem]{\bf Definition} 
  \newtheorem{proposition}[theorem]{\bf Proposition} 
\newtheorem{Algorithm}[theorem]{\bf Algorithm}
\title{\LARGE \bf
Using quantum computers in control: interval matrix properties
}
\author{Jan Schneider, Julian Berberich
\thanks{The authors are with the Institute for Systems Theory and Automatic Control, University of Stuttgart, 70569 Stuttgart, Germany. 
E-mail: julian.berberich@ist.uni-stuttgart.de, contact for correspondence:
J. Berberich.
Funded by Deutsche Forschungsgemeinschaft (DFG, German Research Foundation) under Germany’s Excellence Strategy - EXC 2075 - 390740016. We acknowledge the support by the Stuttgart
Center for Simulation Science (SimTech).
}
}
\begin{document}
\IEEEpubid{\begin{minipage}{\textwidth}\ \\[12pt] 
    \copyright 2024 EUCA. This version has been accepted for publication in Proc. European Control Conference (ECC), 2024. Personal use of this material is permitted. Permission
from EUCA must be obtained for all other uses, in any current or future media, including reprinting/republishing this material for advertising or promotional
purposes, creating new collective works, for resale or redistribution to servers or lists, or reuse of any copyrighted component of this work in other works.\end{minipage}}

\maketitle


\begin{abstract}
Quantum computing provides a powerful framework for tackling computational problems that are classically intractable.
     The goal of this paper is to explore the use of quantum computers
     for solving relevant problems in systems and control theory. 
In the recent literature, different quantum algorithms have been developed to tackle binary optimization, which plays an important role in various control-theoretic problems. As a prototypical example, we consider the verification of interval matrix properties such as non-singularity and stability on a quantum computer.
     We present a quantum algorithm solving these problems and we study its performance in simulation.
     Our results demonstrate that quantum computers provide a promising tool for control whose applicability to further computationally complex problems remains to be explored.

\end{abstract}

\section{Introduction}
Quantum computing has gained increasing attention in recent years due to its ability to solve certain computationally challenging problems more efficiently than classically possible.
This includes, for example, integer factorization~\cite{shor1999polynomial}, unstructured search~\cite{grover1996fast}, or linear systems of equations~\cite{harrow2009quantum}, but also the simulation of classical~\cite{giannakis2022embedding,schalkers2022efficient} and quantum~\cite{feynman1982simulating}, \cite[Section 4.7]{nielsen2011quantum} systems.
However, these algorithms cannot be implemented reliably for relevant problem sizes on current noisy intermediate-scale quantum (NISQ) devices~\cite{preskill2018quantum} due to problems connected to noise and scalability.
Variational quantum algorithms (VQAs) are a promising class of quantum algorithms which combine a quantum computer with a classical optimization algorithm~\cite{cerezo2021variational}.
VQAs involve trainable parameters which are optimized iteratively using, e.g., a gradient descent scheme, and, therefore, they are well-suited for NISQ devices due to their adaptation to small and noisy hardware.
The quantum approximate optimization algorithm (QAOA)~\cite{farhi2014quantum} is one of the most popular VQAs and it can be used to solve a specific class of integer programs: quadratic unconstrained binary optimization (QUBO) problems.

Problems with integer variables are relevant in various domains of systems and control theory and, therefore, QAOA is a promising candidate for achieving computational speedups in control.
Further quantum algorithms that can tackle computational control problems are listed in~\cite{berberich2023quantum}.
Yet, the usage of quantum computers in control is largely unexplored, except for results on, e.g., MPC with finite input spaces~\cite{inoue2020model} or decentralized control~\cite{deshpande2022quantum}.
In this paper, we propose an algorithm for the verification of interval matrix properties, which is known to be a computationally hard problem, on a quantum computer.

In computational complexity theory, problems can be classified according to the time required for their solution.
Problems which can be solved with a polynomial-time algorithm belong to class P and those whose solution can be verified in polynomial time belong to class NP. For NP-hard problems, which are problems to which every problem in NP can be reduced efficiently, there exist no known polynomial-time algorithms on a classical computer. 
Various control-theoretic problems have been shown to be NP-hard, including static output feedback design or verifying interval matrix properties~\cite{poljak1993checking,nemirovskii1993several,blondel1997np}. 

In this paper, we introduce an approach for verifying interval matrix properties on a quantum computer. We focus on two main properties: robust non-singularity and robust stability of interval matrices, meaning that all members of the interval matrix are non-singular or stable, respectively. Existing research by~\cite{poljak1993checking} shows that the verification of robust non-singularity can be equivalently reformulated as a binary optimization problem. Given that this binary optimization problem is a QUBO problem, it can be tackled using QAOA. The main contribution of this paper is a quantum algorithm based on QAOA which can verify robust non-singularity and robust stability of interval matrices. 
The applicability is illustrated with numerical examples.
\subsection*{\textit{Notation}}
Let $e=\begin{bmatrix}
    1 & \hdots & 1
\end{bmatrix}^\top $ and define the $n$-dimensional discrete cube $Q_n = \{-1,1\}^n$. For each element $z \in Q^n$, the matrix $T_z$ represents a diagonal matrix with the entries of $z$ on its diagonal. Further, for a matrix $A\in\mathbb{R}^{n\times n}$, we define
\begin{align}
    \rho_0(A)=\max \{|\lambda|\;|Ax=\lambda x \; \text{for some} \; x\neq 0,\lambda \in \mathbb{R}\}.
\end{align} 
The Hermitian conjugate of a matrix $A\in\mathbb{C}^{n\times m}$ is denoted by $A^\dagger$.
Finally, $I_n$ denotes an $n\times n$ identity matrix.

\section{Interval Matrices}\label{sec:interval}
In this section, we introduce the concept of interval matrices along with the key properties that are studied in the present paper.
An interval matrix is defined as
\begin{align}
    A^I=[ \,\underline{A},\overline{A}] \,=\{A\in\mathbb{R}^{n\times m}|
\underline{A}\leq A\leq\overline{A}\}
\subseteq\mathbb{R}^{n\times m},
\end{align}
where $\overline{A}$ and $\underline{A}$ represent an upper and lower bound matrix and the inequality is interpreted element-wise. 
Further, we define the center of an interval matrix as 
\begin{align}
A_m = \frac{1}{2}(\overline{A}+\underline{A}).
\end{align}
\vskip20pt
\newpage

This allows to reformulate the lower and upper bound matrices as $\overline{A}=A_m+\Delta_d$ and $\underline{A}=A_m-\Delta_d$ with $\Delta_d=\frac{1}{2}(\overline{A}-\underline{A})$.

\subsection{Non-singularity of interval matrices}
In this paper, we study
non-singularity of interval matrices based on 
the radius of non-singularity. The definition of a non-singular interval matrix is given in the following.
\begin{definition}
An interval matrix $A^I$ is non-singular if all matrices $A\in A^I$ are non-singular.
\end{definition}
The radius of non-singularity measures the distance of a matrix $A$ to the closest singular matrix according to an a priori fixed shifting matrix $\Delta$ and is defined as
\begin{align}
\label{eq:d}
    d(A,\Delta)=\min_{\varepsilon\geq0}\quad&\varepsilon\\\nonumber
\text{s.t.}\quad& A-\varepsilon\Delta\leq A'\leq A+\varepsilon\Delta\\ \nonumber
&\text{for some singular}\>A'\in\mathbb{R}^{n\times n}.
\end{align}
This problem minimizes $\varepsilon$ such that the interval matrix $[A-\varepsilon\Delta,A+\varepsilon\Delta]$ is singular, i.e., contains a singular matrix $A'\in[A-\varepsilon\Delta,A+\varepsilon\Delta]$. 
Given the optimal value $\varepsilon^*$ of problem~\eqref{eq:d}, a non-singular interval matrix can be defined accordingly as
\begin{align}
    [A-\varepsilon\Delta,A+\varepsilon\Delta]
\end{align}
for any $\varepsilon < \varepsilon^*$. 
The radius of non-singularity is relevant, e.g., for sensor processing in the presence of noise~\cite{poljak1993checking} or for sensitivity analysis of linear systems~\cite{deif1986sensitivity}, and it has close connections to various concepts including the structured singular value which is a powerful tool in robust stability analysis~\cite{doyle1982analysis}.
Further, as we show below, it can be used to study robust stability properties of uncertain linear systems.

As an instructive example, let us consider the matrices 
\begin{align}
\label{eq:example}
    A = \begin{bmatrix}
        1 & -1 \\ 0 & 1
    \end{bmatrix},
    \Delta = \begin{bmatrix}
        1 & 1 \\ 1 & 1
    \end{bmatrix}.
\end{align}
The boundaries of the corresponding interval matrix considered in~\eqref{eq:d} are given by 
\begin{align}
    A\pm\varepsilon\Delta=\begin{bmatrix}
        1\pm\varepsilon & -1 \pm\varepsilon \\ \pm\varepsilon & 1\pm\varepsilon
    \end{bmatrix}.
\end{align}
The optimal solution of~\eqref{eq:d} is given by $\varepsilon^*=\frac{1}{3}$ with the associated singular matrix
\begin{align}
    A'=A-\frac{1}{3}\Delta=\begin{bmatrix}
        \frac{2}{3} & -\frac{4}{3} \\ -\frac{1}{3} & \frac{2}{3}
    \end{bmatrix}.
\end{align}
As shown in~\cite{poljak1993checking}, the radius of non-singularity of a given, non-singular matrix $A$ and a matrix $\Delta$, can be reformulated as a combinatorial optimization problem:
\begin{align}
    \label{eq:d_re}
    d(A,\Delta)=\frac{1}{\max\{\rho_0(A^{-1}T_y \Delta T_z)|y,z \in Q_n \}}.
\end{align}
The optimization problem appearing in the denominator is the central problem considered in this paper. The proof of NP-hardness is given in \cite{poljak1993checking} and is sketched briefly in the following.
The maximization in~\eqref{eq:d_re} can be simplified for the specific choice $\Delta=ee^\top $, which allows to reformulate~\eqref{eq:d_re} as
\begin{align}
\label{eq:simpled}
    d(A) = \frac{1}{r(A^{-1})}
\end{align}
with
\begin{align}
    \label{eq:r_min}
    r(M)&=\max\{z^\top My|z,y\in Q_n\}.
\end{align}

It is shown in \cite{poljak1993checking} that the \emph{maximum-cut problem} can be reduced to the optimization problem~\eqref{eq:r_min}. Since the maximum-cut problem is NP-hard, it follows that computing the radius of non-singularity is NP-hard as well.

In this work, we derive a combinatorial optimization problem similar to~\eqref{eq:simpled} but for a more general class of matrices $\Delta$ than considered in \cite{poljak1993checking}. To be precise, the condition $\Delta=ee^\top $ is relaxed to $\Delta$ having rank $1$, allowing for more versatile and practically relevant problem considerations. 
\subsection{Stability of interval matrices}
Beyond non-singularity, we also address the problem of stability verification of interval matrices.
\begin{definition}
An interval matrix $A^I$ is stable if all matrices $A\in A^I$ are stable, i.e., $\mathrm{Re}(\lambda_i(A))<0$ for all eigenvalues $\lambda_i(A)$ of $A$.
\end{definition}

Stability of an interval matrix $A^I$ implies robust stability of the uncertain system
\begin{align}
    \dot{x}=Ax
\end{align}
with $A\in A^I$, which is an important problem in robust control~\cite{zhou1996robust}.
In \cite{rohn1994checking}, it is shown that, just like non-singularity, verifying stability of interval matrices is NP-hard as well. 
In the following result, we exploit that stability and non-singularity are closely related due to the continuity of eigenvalues in the matrix entries. 
For simplicity, we focus on symmetric interval matrices.
Considering general interval matrices is an interesting issue for future research.
\begin{definition}
    For an interval matrix $A^I$, the symmetric interval matrix $A^I_{\mathrm{sym}}$ is defined as 
    \begin{align}
        A^I_{\mathrm{sym}}=\{A\in A^I\mid A=A^\top\}.
    \end{align}
\end{definition}
Stability verification of a symmetric interval matrix is relevant in scenarios, where not only uncertainty bounds specified by $A^I$ are available but also additional structural knowledge on symmetry of the dynamics.
The following result shows that stability of symmetric interval matrices can be verified based on a non-singularity test.
\begin{proposition}\label{prop:1}
Suppose $A^I_{\mathrm{sym}}$ is a non-singular symmetric interval matrix and there exists a stable $A\in A^I_{\mathrm{sym}}$.
Then, $A^I_{\mathrm{sym}}$ is stable.
\end{proposition}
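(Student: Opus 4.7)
The plan is a standard continuity/connectedness argument exploiting the fact that symmetric real matrices have real eigenvalues, so instability cannot happen without an eigenvalue crossing zero, which would violate non-singularity.

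First I would note that $A^I_{\mathrm{sym}}$ is convex: it is the intersection of the (convex) interval matrix $A^I$ with the (linear) subspace of symmetric matrices. In particular, it is path-connected, and the straight-line segment $A(t) = (1-t)A_0 + tA_1$ between any two elements $A_0, A_1 \in A^I_{\mathrm{sym}}$ lies entirely in $A^I_{\mathrm{sym}}$.

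Next I would argue by contradiction. Suppose $A^I_{\mathrm{sym}}$ is not stable. Then there exists $A_1 \in A^I_{\mathrm{sym}}$ with some eigenvalue having non-negative real part. Since $A_1$ is symmetric, all its eigenvalues are real, so in particular $\lambda_{\max}(A_1) \geq 0$. Let $A_0 \in A^I_{\mathrm{sym}}$ denote the stable matrix whose existence is assumed, so $\lambda_{\max}(A_0) < 0$. Consider the path $A(t) = (1-t)A_0 + tA_1$, $t \in [0,1]$. Each $A(t)$ is symmetric and lies in $A^I_{\mathrm{sym}}$, so its spectrum is real and, by hypothesis, does not contain $0$.

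The key step is to invoke continuity of the maximum eigenvalue. For symmetric matrices, $\lambda_{\max}$ is a continuous (in fact Lipschitz) function of the entries, so $t \mapsto \lambda_{\max}(A(t))$ is continuous on $[0,1]$, strictly negative at $t=0$, and non-negative at $t=1$. By the intermediate value theorem there exists $t^* \in (0,1]$ with $\lambda_{\max}(A(t^*)) = 0$. Hence $A(t^*) \in A^I_{\mathrm{sym}}$ is singular, contradicting the assumed non-singularity of $A^I_{\mathrm{sym}}$. Therefore every $A \in A^I_{\mathrm{sym}}$ satisfies $\lambda_{\max}(A) < 0$, which is stability.

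I do not anticipate a real obstacle here; the only subtlety is emphasizing why symmetry is essential: for non-symmetric interval matrices, eigenvalues are generally complex and could move from the open left half-plane into the open right half-plane without ever hitting the imaginary axis along the connecting segment — or they could cross the imaginary axis away from the origin, which non-singularity does not exclude. Symmetry forces all eigenvalues to be real, so the only gateway from stability to instability is through zero, which is precisely what non-singularity forbids.
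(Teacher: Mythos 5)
Your proof is correct and follows essentially the same route as the paper's: symmetry forces real eigenvalues, so by continuity of eigenvalues an unstable member would require some matrix on a connecting path to have a zero eigenvalue, contradicting non-singularity. Your version is simply a more explicit rendering of the paper's brief argument, usefully filling in the connectedness step (convexity of $A^I_{\mathrm{sym}}$) and the intermediate value theorem application that the paper leaves implicit.
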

\textbf{Proof.}
If $A\in A^I_{\mathrm{sym}}$ is stable, then all its eigenvalues have real value less than $0$. Note that all eigenvalues of the matrices in $A^I_{\mathrm{sym}}$ are real due to symmetry. Since eigenvalues are continuous functions in the matrix entries, the non-singularity of $A^I_{\mathrm{sym}}$ implies that all $A'\in A^I_{\mathrm{sym}}$ are stable.
\qed

Motivated by Proposition \ref{prop:1}, we focus on deriving a quantum algorithm for verifying robust non-singularity, which then also allows to verify robust stability of a symmetric interval matrix by testing stability of one of its elements. Note that the eigenvalues of a single matrix can be computed efficiently on a classical computer, which is the only additional step for the stability check.
\section{Quantum Algorithm}
In this section, we propose the quantum algorithm for verifying the interval matrix properties introduced above. First, in Section~\ref{subsec:qb}, we present basic elements of quantum computing that are required to state and implement our algorithm. Next, in Section~\ref{subsec:QAOA}, the key underlying algorithm QAOA is explained in more detail followed by the construction of the problem Hamiltonian in Section~\ref{subsec:problem_Hamiltonian}. Finally, the overall algorithm is stated in Section~\ref{subsec:algo}.
\subsection{Quantum Basics}
\label{subsec:qb}
In the following, we present required basics of quantum computing.
For further details, we refer to the tutorial~\cite{berberich2023quantum}, which introduces quantum computing from the perspective of control, as well as to the textbook~\cite{nielsen2011quantum}.
Qubits (short for ``quantum bits'') are the basic components of a quantum computer, comparable to bits in classical computing. Mathematically, a qubit is written as
\begin{equation}\label{eq:qubit}
    \ket{\psi}=\begin{bmatrix} z_0 \\ z_1
    \end{bmatrix}
\end{equation}
for some $z_0,z_1 \in \mathbb{C}$ with $|z_0|^2+|z_1|^2=1$.
We use the standard Dirac notation $\ket{\psi}$ for denoting quantum states and $\bra{\psi}=\ket{\psi}^{\dagger}$ for their Hermitian conjugate.
The precise value of $\ket{\psi}$ is not accessible.
Instead, a measurement can only reveal that the system is in one of the two basis states 
\begin{align}
    \ket0=\begin{bmatrix}1\\0\end{bmatrix}\>\text{and}\>\ket1=\begin{bmatrix}0\\1\end{bmatrix},
\end{align}
where the probabilities for either outcome are  $|z_0|^2$ and $|z_1|^2$, respectively.
A quantum state consisting of $n$ qubits lives in $\mathbb{C}^{2^n}.$

The second fundamental component of a quantum computer are quantum gates. 
These are used to manipulate the qubits and they are represented by unitary matrices $U$, i.e., they satisfy $U^\dagger U=I$.
The application of a quantum gate $U$ to an input state $\ket{\psi_{\mathrm{in}}}$ is defined by multiplication, i.e., the output state is 
\begin{align}
\ket{\psi_{\mathrm{out}}}=U\ket{\psi_{\mathrm{in}}}.
\end{align}
The \emph{Pauli gates} $X$, $Y$, and $Z$ are popular examples of quantum gates and they are defined as
\begin{align}
    X = \begin{bmatrix} 0 & 1 \\ 1 & 0 
    \end{bmatrix},
    Y = \begin{bmatrix} 0 & -i \\ i & 0 
    \end{bmatrix},
    Z = \begin{bmatrix} 1 & 0 \\ 0 & -1 
    \end{bmatrix}.
\end{align}
For example, applying the $X$ gate to a qubit swaps the amplitudes of $\ket{0}$ and $\ket{1}$ as can be seen by 
\begin{align}
\label{eq:X01}
    X \ket{0} = \begin{bmatrix} 0 & 1 \\ 1 & 0 
    \end{bmatrix} 
    \begin{bmatrix}  1 \\  0 
    \end{bmatrix} = \ket{1}
\end{align}
and similarly $X\ket1=\ket0$.
The final part of a quantum algorithm is the measurement.
In Figure~\ref{fig:1qex}, a simple example of a quantum circuit is given which implements the operation \eqref{eq:X01} with subsequent measurement of the qubit. 

\begin{figure}
    \centering
    \begin{quantikz}
\lstick{$\ket{0}$} & \gate{X} &
    \meter{}
\rstick{}
    \end{quantikz}
    \caption{Quantum circuit implementing the operation~\eqref{eq:X01} followed by a measurement.}
    \label{fig:1qex}
\end{figure}
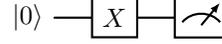

As mentioned above, qubits cannot be measured directly, but a measurement can only return one of a finite number of possible outcomes, where the corresponding probabilities are determined by the probability amplitudes $z_0$, $z_1$ of the qubit, cf.\ \eqref{eq:qubit}.
To be more precise, measurements are always taken w.r.t.\ an \emph{observable} $\mathcal{M}=\mathcal{M}^\dagger$.
Repeated measurements of a state $\ket{\psi}$ w.r.t.\ the observable $\mathcal{M}$ allow to retrieve the expectation value 
\begin{align}\label{eq:measurement}
    \braket{\psi|\mathcal{M}|\psi}=\psi^\dagger\mathcal{M}\psi.
\end{align}
For many quantum algorithms, in particular VQAs, this expectation value is the actual output of the algorithm. 

\subsection{Quantum approximate optimization algorithm}
\label{subsec:QAOA}
VQAs are a class of quantum algorithms 
combining an optimization procedure on a classical computer with a quantum algorithm~\cite{cerezo2021variational}. This allows the algorithm to adapt to noise on the quantum computer and, thereby, provide possibly more reliable results.
QAOA~\cite{farhi2014quantum} is an important example of VQAs. It is tailored to solving combinatorial optimization problems and involves alternating between a classical optimization routine to determine parameters of quantum gates and a quantum evolution via the resulting parametrized circuit.

In the following, we provide a brief introduction to QAOA.
QAOA can be used to solve problems of the form
\begin{align}
    \label{eq:opti1}
    \max_{x\in\{0,1\}^n} \quad&C(x)
\end{align}
for a cost function $C:\{0,1\}^n\to\mathbb{R}$.
The algorithm consists of repeated applications of parametrized unitaries to an initial quantum state $\ket{\psi_0}=\begin{bmatrix}
    1&\hdots&1
\end{bmatrix}^\top$.
The two unitaries applied to this state are the \emph{mixing unitary} $U_{\mathrm{B}}(\beta_j)=e^{-i\beta_j H_{\mathrm{B}}}$ with mixing Hamiltonian $H_{\mathrm{B}}=H_{\mathrm{B}}^\dagger$ and the \emph{problem unitary} $U_{\mathrm{P}}(\gamma_j)=e^{-i\gamma_j H_{\mathrm{P}}}$ with problem Hamiltonian $H_{\mathrm{P}}=H_{\mathrm{P}}^\dagger$. The parameter vectors $\beta,\gamma \in \mathbb{R}^p$ are optimization variables for the $p$ layers of the quantum circuit.
These two unitaries are applied to the initial state $\ket{\psi_0}$ in an alternating fashion
\begin{align}\label{eq:QAOA_circuit}
    U(\beta_p,\gamma_p)\cdots U(\beta_1,\gamma_1) 
\end{align}
with $U(\beta_j,\gamma_j)=U_{\mathrm{B}}(\beta_j)U_{\mathrm{P}}(\gamma_j)$.
The number of alternations $p$ in~\eqref{eq:QAOA_circuit} is problem-specific but typically chosen small to enable implementations on current quantum hardware.
Figure~\ref{fig:circuit_QAOA} shows the $2$-qubit quantum circuit consisting of a measurement of the state
\begin{align}
    U(\beta_2,\gamma_2)U(\beta_1,\gamma_1)\ket{\psi_0}.
\end{align}
Here, $H$ denotes the \emph{Hadamard} gate defined as 
\begin{align}
H=\frac{1}{\sqrt{2}}\begin{bmatrix}
    1&1\\1&-1
\end{bmatrix},
\end{align}
which generates the input state due to 
\begin{align}
    \ket{\psi_0}=(H\otimes H)(\ket0\otimes\ket0)
\end{align}
with the Kronecker product $\otimes$.

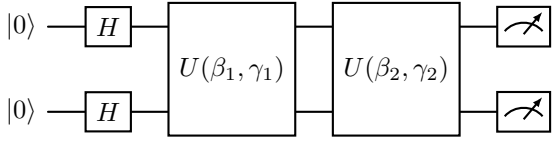
\begin{figure}
    \centering
    \begin{quantikz} \lstick{$\ket{0}$}&\gate{H}&\gate[2]{U(\beta_1,\gamma_1)}&\gate[2]{U(\beta_2,\gamma_2)}& \meter{} \\ \lstick{$\ket{0}$}&\gate{H}&&&\meter{}
    \end{quantikz}
    \caption{Circuit representation of QAOA for $n=2$ qubits and $p=2$ layers.}
    \label{fig:circuit_QAOA}
\end{figure}

Let us now discuss how the mixing and problem Hamiltonians $H_{\mathrm{B}}$ and $H_{\mathrm{P}}$ are chosen.
The problem Hamiltonian encodes the 
cost function $C(x)$ into the quantum algorithm, see Section~\ref{subsec:problem_Hamiltonian} for the construction procedure.
On the other hand, the mixing Hamiltonian is used for entangling the solution amplitudes and expanding the solution space to create a diverse set of candidate solutions. 
The mixing Hamiltonian is commonly chosen independently of the cost function $C(x)$ as
\begin{align}
    H_{\mathrm{B}} = \sum_{j=1}^n X_j,
\end{align}
where we use the standard notation
\begin{align}\label{eq:Xi_def}
    X_j=\underbrace{I_2\otimes\cdots\otimes I_2}_{j-1\>\text{times}}
    \otimes\, X\otimes \underbrace{I_2\otimes\cdots\otimes I_2}_{n-j\>\text{times}}.
\end{align}
Finally, the expectation value 
\begin{align}\label{eq:QAOA_expectation}
\braket{\psi(\beta,\gamma)|H_{\mathrm{P}}|\psi(\beta,\gamma)}
=\psi(\beta,\gamma)^{\dagger} H_{\mathrm{P}}\psi(\beta,\gamma)
\end{align}
of the parametrized state $\ket{\psi(\beta,\gamma)}$ is determined based on repeated measurements, compare~\eqref{eq:measurement}. 
Based on this value, a classical computer determines a new set of parameters $\gamma$ and $\beta$ maximizing~\eqref{eq:QAOA_expectation} using some optimization scheme, e.g., gradient descent.
This procedure is then repeated until convergence.
Figure~\ref{fig:VQA} summarizes the basic scheme of 
QAOA as an iteration between classical optimization and execution of the quantum algorithm~\eqref{eq:QAOA_expectation}.

After converging to a set of parameters $\beta^*$, $\gamma^*$, the candidate solution for~\eqref{eq:opti1} is obtained by performing repeated measurements of the resulting quantum state
\begin{align}\label{eq:QAOA_output_state}
    &\ket{\psi(\beta^*,\gamma^*)}\\\nonumber 
    =&a_1\ket{0\dots00}+a_2\ket{0\dots01}+\dots+a_{2^n}\ket{1\dots11},
\end{align}
where we use the standard notation for \emph{computational basis states}, e.g.,
\begin{align}
\ket{0...011}=\ket0\otimes\dots\otimes\ket0\otimes\ket1\otimes\ket1.
\end{align}
A measurement of $\ket{\psi(\beta^*,\gamma^*)}$ returns the $j$-th computational basis state with probability $|a_j|^2$.
The computational basis state with the highest probability (determined empirically by counting occurrences in repeated measurements) is then employed as candidate solution for the original binary optimization problem~\eqref{eq:opti1}.

\begin{figure}
    \centering
    \begin{tikzpicture}[node distance=1.5cm,
    every node/.style={fill=white, font=\small}, align=center]
  \node (paramUpdate)             [activityStarts]              {Parameter Update\\on Classical Computer};
  \node (quantEval)     [process, below of=paramUpdate]          {Evaluating Cost Function\\on Quantum Computer};
  \draw[->] (paramUpdate.east) -- ++(1.5,0) -- ++(0,-1.5) --                
     node[xshift=0.6cm,yshift=0.8cm, text width=2.0cm]
     {$\beta,\gamma$}(quantEval.east);  
  \draw[<-] (paramUpdate.west) -- ++(-1.5,0) -- ++(0,-1.5) --               
     node[xshift=-1cm,yshift=0.8cm, text width=2.0cm]
     {$\psi(\beta,\gamma)^{\dagger} H_{\mathrm{P}}\psi(\beta,\gamma)$}(quantEval.west);  
   
  \end{tikzpicture}
    \caption{Basic scheme of QAOA as an iterative optimization containing classical parameter updates and executions of parametrized quantum circuits.}
    \label{fig:VQA}
\end{figure}
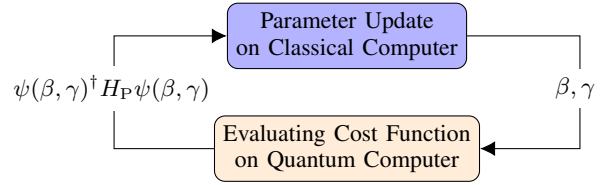

\subsection{Formulating the problem Hamiltonian}\label{subsec:problem_Hamiltonian}

In the following, we show how the problem Hamiltonian $H_{\mathrm{P}}$ can be chosen in order to compute the radius of non-singularity via QAOA.
Recall the combinatorial optimization problem introduced in \eqref{eq:d_re}, i.e.,
\begin{align}
\label{eq:opti}
\max\; \bar{C}(y,z)=\max\{\rho_0(A^{-1}T_y \Delta T_z)|y,z \in Q_n \}.
\end{align}
In order to apply QAOA, we now transform~\eqref{eq:opti} into a binary optimization problem via a variable transformation from $\pm1$ to $0/1$.
More precisely, we introduce the binary variable $x\in\{0,1\}^n$ according to
$x_i=\frac{1-z_i}{2}$ and $x_{i+n}=\frac{1-y_i}{2}$, resulting in
\begin{align}
\label{eq:CE}
    (-1)^{x_i}=z_i,\> (-1)^{x_{i+n}}=y_i.
\end{align}
In order to construct a suitable $H_{\mathrm{P}}$, the relation
\begin{align}
    \label{eq:hp}
    C(x)\ket{x}=H_{\mathrm{P}}\ket{x}
\end{align}
needs to hold for any $x\in\{0,1\}^n$.
Here, $C(x)$ is the cost in the new parametrization such that $C(x)=\bar{C}(y,z)$ for all $x\in\{0,1\}^n$ and $y,z$ satisfying~\eqref{eq:CE}.
Indeed, multiplying~\eqref{eq:hp} by $\bra{x}$ from the left-hand side and using unit norm of $\ket{x}$ yields $C(x)=\braket{x|H_{\mathrm{P}}|x}$, which is why QAOA aims at maximizing $\braket{x|H_{\mathrm{P}}|x}$, compare~\eqref{eq:QAOA_expectation}. 

The following result provides a possible choice of $H_{\mathrm{P}}$ satisfying~\eqref{eq:hp}.
\begin{theorem}\label{thm:H_P}
\label{th:hp}
Suppose $\mathrm{rank}(\Delta)=1$ and define the problem Hamiltonian
    \begin{align*}
        H_{\mathrm{P}}=\sum_{i,j=1}^n v_i\Tilde{a}_{ij} \delta_jZ_iZ_{n+j}
    \end{align*} 
with $(\tilde{a}_{ij})=A^{-1}$, $Z_i$ as in~\eqref{eq:Xi_def} for $X=Z$, and $\delta,v\in\mathbb{R}^{n}$ such that $\Delta=\delta v^\top$.
Then,~\eqref{eq:hp} holds.
\end{theorem}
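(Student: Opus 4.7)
The plan is to exploit the rank-one hypothesis on $\Delta$ to collapse the eigenvalue problem hidden in $\rho_{0}$ to a single scalar product, and then to translate that scalar into the eigenvalue of $H_{\mathrm{P}}$ on a computational basis state via the action of Pauli-$Z$ gates. First, I would factor $\Delta=\delta v^{\top}$ and observe that
$$A^{-1}T_{y}\Delta T_{z}=(A^{-1}T_{y}\delta)(v^{\top}T_{z})$$
has rank at most one. The unique possibly nonzero eigenvalue of a rank-one matrix $uw^{\top}$ is $w^{\top}u$, realized on the eigenvector $u$. Applied here, the nonzero eigenvalue of $A^{-1}T_{y}\Delta T_{z}$ is the scalar
$$\mu(y,z)=v^{\top}T_{z}A^{-1}T_{y}\delta=\sum_{i,j=1}^{n}v_{i}z_{i}\tilde{a}_{ij}y_{j}\delta_{j},$$
so that $\rho_{0}(A^{-1}T_{y}\Delta T_{z})=|\mu(y,z)|$.

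Next, I would apply the substitution $z_{i}=(-1)^{x_{i}}$ and $y_{j}=(-1)^{x_{n+j}}$ from \eqref{eq:CE}. Because $Q_{n}$ is closed under negation, maximizing $|\mu|$ over $Q_{n}\times Q_{n}$ coincides with maximizing $\mu$ itself, which motivates defining the cost as
$$C(x)=\sum_{i,j=1}^{n}v_{i}\tilde{a}_{ij}\delta_{j}(-1)^{x_{i}+x_{n+j}}.$$
The identity \eqref{eq:hp} then follows from the spectral relation $Z_{k}\ket{x}=(-1)^{x_{k}}\ket{x}$ on every computational basis state; since $Z_{i}$ and $Z_{n+j}$ act on disjoint tensor factors, they commute and satisfy $Z_{i}Z_{n+j}\ket{x}=(-1)^{x_{i}+x_{n+j}}\ket{x}$. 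Summing with the coefficients $v_{i}\tilde{a}_{ij}\delta_{j}$ and invoking linearity yields $H_{\mathrm{P}}\ket{x}=C(x)\ket{x}$, as required.

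The main obstacle is conceptual rather than computational: it lies in reconciling the absolute value implicit in $\rho_{0}$ with the signed diagonal entries of $H_{\mathrm{P}}$. One must argue, via the sign symmetry of $Q_{n}$, that replacing $|\mu|$ by $\mu$ in the cost does not alter the global optimum of the QAOA objective, so that $C(x)$ is a faithful encoding of $\bar{C}(y,z)$ in the sense required by Section~\ref{subsec:QAOA}. Once this reduction is in place, the remaining steps, namely the rank-one eigenvalue identity and the Pauli-$Z$ spectral action on computational basis states, are entirely routine.
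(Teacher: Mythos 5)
Your proposal is correct and follows essentially the same route as the paper's proof: factor $\Delta=\delta v^\top$ to reduce $\rho_0(A^{-1}T_y\Delta T_z)$ to the scalar $v^\top T_z A^{-1}T_y\delta$, drop the absolute value by the sign symmetry of $Q_n$, substitute the $\pm1$-to-binary change of variables, and read off the diagonal action of $Z_iZ_{n+j}$ on computational basis states. The only cosmetic difference is that you invoke the rank-one eigenvalue identity directly, whereas the paper rederives it from the eigenvector equation.
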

\begin{proof}
First,~\eqref{eq:opti} is reformulated using that $\Delta=\delta v^\top$, i.e., $\mathrm{rank}(\Delta)=1$. To this end, if $\lambda$ is a non-zero, real eigenvalue of $A^{-1}T_y\Delta T_z$, it holds that
\begin{align}
\label{eq:lw}
    A^{-1}T_y\Delta T_z w = A^{-1}T_y\delta v^\top  T_z w = \lambda w.
\end{align}
Since $\lambda \neq 0$, it follows that $v^\top  T_z w \neq 0$.
Together with \eqref{eq:lw}, this implies
\begin{align}
    A^{-1}T_y\delta &= \lambda w (v^\top  T_z w)^{-1}. 
\end{align}
Left-multiplying by $v^\top  T_z$ yields
\begin{align}
    v^\top  T_z A^{-1}T_y\delta &= \lambda.
\end{align}
Subsequently, this results in
\begin{align}
    \label{eq:finmax}
    \max \bar{C}(y,z)&=\max\{|v^\top  T_z A^{-1}T_y\delta|\;|y,z \in Q_n \} \\
    &=\max\{v^\top  T_z A^{-1}T_y\delta|y,z \in Q_n \}. \nonumber
\end{align}
The absolute value can be neglected due to the (bi-)linear cost function and the symmetry of $Q_n$.
Substituting \eqref{eq:CE} into $\bar{C}(y,z)$, we obtain
\begin{align}
\label{eq:sumcost}
    C(x)&=v^\top  T_{x_z} A^{-1}T_{y_z} \delta \\
    &= \begin{bmatrix}
        x_{z,1}v_{1}\\
        \vdots \\
        x_{z,n}v_{n}
    \end{bmatrix}^\top 
        A^{-1}\begin{bmatrix}
        x_{y,1}\delta_{1}\\
        \vdots \\
        x_{y,n}\delta_{n}
    \end{bmatrix} \nonumber \\\nonumber
    &=\sum_{i,j=1}^n v_i\Tilde{a}_{ij} \delta_j (-1)^{x_i} (-1)^{x_{n+j}}
\end{align}
with 
\begin{align}
    x_z&=\begin{bmatrix}
        (-1)^{x_1} & \hdots & (-1)^{x_n}
    \end{bmatrix}^\top  \\
    x_y&=\begin{bmatrix}
        (-1)^{x_{n+1}} & \hdots & (-1)^{x_{2n}}
    \end{bmatrix}^\top.  \nonumber
\end{align}
Note that the Pauli-Z gate satisfies
\begin{align}
    Z\ket{x}=(-1)^x\ket{x}, x\in \{0,1\}.
\end{align}
Applied to~\eqref{eq:sumcost}, this results in 
\begin{align}
    C(x)\ket{x}&=\sum_{i,j=1}^n v_i\Tilde{a}_{ij} \delta_jZ_iZ_{n+j}\ket{x_0,\hdots,x_{2n-1}} \\
    &=H_{\mathrm{P}}\ket{x}. \nonumber\qedhere
\end{align}
\end{proof}

The proof of Theorem~\ref{thm:H_P} consists of two parts:
1) reformulation of the optimization problem~\eqref{eq:opti} as a QUBO and 2) reformulation of the QUBO as a VQA by choosing a suitable problem Hamiltonian $H_{\mathrm{P}}$.
The step 1) is inspired by~\cite{poljak1993checking} but faces the additional technical challenge of considering a general rank-$1$ matrix $\Delta$ rather than $\Delta=ee^\top$.
Further, the step 2) follows the derivation in~\cite{hadfield2021representation}, adapted to the specific binary optimization problem considered in the present paper.

\subsection{The Quantum Algorithm}
\label{subsec:algo}
Algorithm~\ref{alg:rns} summarizes the overall quantum algorithm verifying robust non-singularity of a given interval matrix $A^I$. It can also be used to verify stability of a symmetric interval matrix by using Proposition~\ref{prop:1}.

\begin{algorithm}
\begin{Algorithm}\label{alg:rns}
\normalfont{\textbf{Verifying non-singularity of an interval matrix}}\\
Input: $A^I=[A_m-\varepsilon\Delta,A_m+\varepsilon\Delta] \,$ and $\Delta=(\delta_{ij})$ with $\delta_{ij}\geq0$ and $\text{rank}(\Delta)=1$.
\begin{enumerate}
\item Compute $A_m^{-1}$
\item Construct problem Hamiltonian $H_{\mathrm{P}}$ (cf.\ Theorem~\eqref{thm:H_P})
\item Run QAOA to compute $d(A_m,\Delta)$ (cf.\ Section \ref{subsec:QAOA} and \cite{farhi2014quantum} for details)
\item If $d(A_m,\Delta)> \varepsilon$, then $A^I$ is non-singular
\end{enumerate}
\end{Algorithm}
\end{algorithm}

\section{Implementation}
In order to examine the performance of the proposed algorithm, we simulate the algorithm using the Pennylane toolbox \cite{bergholm2018pennylane} for Python.
The source code for the implementation is publicly accessible on \url{https://github.com/JanKyb/Radius-Of-NonSingularity-using-QAOA}.

To verify the performance, we consider two examples. 
For the first example, the matrices of \eqref{eq:example} are reconsidered. 
Calculating the radius of non-singularity with the proposed quantum algorithm leads to the results in Figure~\ref{fig:2dimex}. 
The counts shown in the figure represent an approximation of the probabilities $|a_j|^2$ corresponding to the quantum state~\eqref{eq:QAOA_output_state} obtained via the proposed algorithm, followed by an additional conversion step according to~\eqref{eq:CE}.

\begin{figure}
    \centering
    \includegraphics[scale=0.6]{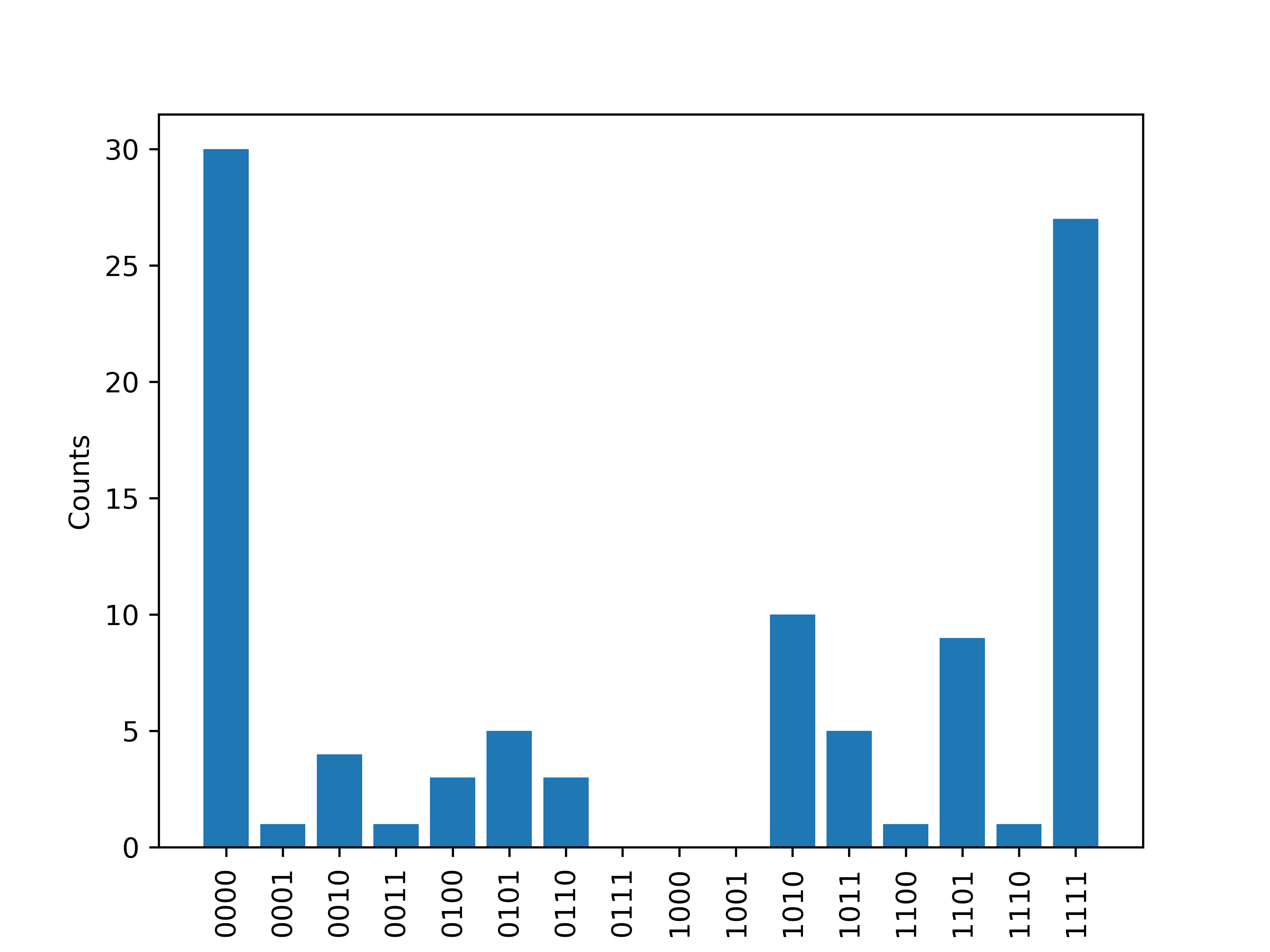}
    \caption{Counts per string for the output of the proposed algorithm when applied to the first example.}
    \label{fig:2dimex}
\end{figure}

It can be seen, that the first string $0000$ yields the solution with the (empirically) highest probability, i.e., the maximum amount of counts. 
Transforming this string back to the initial coordinates based on~\eqref{eq:CE}, the corresponding candidate solution of~\eqref{eq:opti} is given by $z=\begin{bmatrix}
    1 & 1
\end{bmatrix}^\top $ and $y=\begin{bmatrix}
    1 & 1
\end{bmatrix}^\top $.
This yields the result $d(A,\Delta)=\frac{1}{3}$, which is indeed the correct solution, compare Section~\ref{sec:interval}.
Note that also the solution string $1111$ has a comparably large amount of counts. Evaluating the cost function, $1111$ also yields the same optimal result as $0000$, which is due to the symmetry in the variable transformation~\eqref{eq:CE} and the cost function in~\eqref{eq:opti}.
Moreover, several further strings corresponding to a suboptimal solution also have a non-trivial amount of counts. This characteristic is representative for QAOA being a quantum algorithm with an inherently probabilistic output.

As a second and more realistic example, we study robust stability of an RL circuit from~\cite{meisami2008dissipative}, which is a prime example of a symmetric system.
The corresponding system matrix is
\begin{align}
    A=\begin{bmatrix}
        -2&2&0\\2&-5&3\\0&3&-7
    \end{bmatrix}.
\end{align}
To analyze robust stability of this system subject to additional uncertainty, we are interested in finding a possibly large value $\varepsilon$ such that all symmetric matrices in 
$[A-\varepsilon\Delta,A+\varepsilon\Delta]$
are stable, where we consider $\Delta=\begin{bmatrix}
    1&1&1\end{bmatrix}\begin{bmatrix}
        1&1&1\end{bmatrix}^\top$.
In the following, this analysis will be carried out by computing the radius of non-singularity $d(A,\Delta)$ via the proposed quantum algorithm.

        An application of Algorithm~\ref{alg:rns} leads to a distribution of counts analogous to Figure~\ref{fig:2dimex}.
        Due to the increased dimension, we do not depict all possible solutions but only list the bit strings with the most counts in Table~\ref{tab:counts}.

        \begin{table}
        \begin{center}
        \begin{tabular}{ | m{6em} |  m{1cm} |} 
          \hline
          binary strings  & counts \\ 
          \hline
          \hline
          $000111$  & $48$  \\ 
          \hline
          $111000$ & $33$  \\ 
          \hline
          $111001$  & $5$  \\ 
          \hline
          $000110$  & $2$  \\ 
          \hline
          $010011$  & $2$  \\ 
          \hline
        \end{tabular}\vskip10pt
        \caption{\label{tab:counts}Counts per String for the output of the proposed algorithm when applied to the second example}
        \end{center}
        \end{table}

        The solution with the highest number of counts is $000111$.
        To verify that this is indeed an optimal solution, we use~\eqref{eq:CE} to transform the optimal string $000111$ into the initial coordinates as $z=\begin{bmatrix}
            1&1&1
        \end{bmatrix}^\top$ and $y=\begin{bmatrix}
            -1&-1&-1
        \end{bmatrix}^\top$.
        Plugging this candidate into the cost of~\eqref{eq:opti} yields $4.0833$.
        Thus, we have $d(A,\Delta)=\frac{1}{4.0833}$ and, indeed, $A-\frac{1}{4.0833}\Delta$ is (approximately, i.e., modulo numerical inaccuracies) singular.
Using that $A$ is stable, Proposition~\ref{prop:1} implies that all symmetric matrices in $[A-\varepsilon\Delta,A+\varepsilon\Delta]$ for any $\varepsilon<4.0833$ are stable.

Finally, note that the candidate $111000$ only has (compared to the other candidates) slightly less counts than the best solution $000111$ since both candidates are, in fact, equivalent.
This follows again from the symmetry in the variable transformation~\eqref{eq:CE} and the cost function in~\eqref{eq:opti}.

\section{Conclusion}

Quantum computing is a rapidly advancing technology that promises to solve certain computational problems faster than classically possible. 
In this paper, we presented a quantum algorithm for verifying non-singularity and stability of interval matrices, which are relevant problems, e.g., in robust stability analysis.
The proposed algorithm relies on QAOA which is a popular recent quantum algorithm addressing combinatorial optimization.
Extending our results to stability analysis of general (not symmetric) interval matrices as well as the implementation on a real quantum computer are interesting issues for future research.
Moreover, given the high relevance of combinatorial optimization problems in various domains of control, applying QAOA to solve computationally complex problems in control poses another promising future research direction.
Beyond combinatorial optimization, further computational problems appearing in control may as well be amenable to quantum computing, see~\cite{berberich2023quantum} for an overview.

\bibliographystyle{IEEEtran}  
\bibliography{sample}

\begin{thebibliography}{10}
\providecommand{\url}[1]{#1}
\csname url@samestyle\endcsname
\providecommand{\newblock}{\relax}
\providecommand{\bibinfo}[2]{#2}
\providecommand{\BIBentrySTDinterwordspacing}{\spaceskip=0pt\relax}
\providecommand{\BIBentryALTinterwordstretchfactor}{4}
\providecommand{\BIBentryALTinterwordspacing}{\spaceskip=\fontdimen2\font plus
\BIBentryALTinterwordstretchfactor\fontdimen3\font minus \fontdimen4\font\relax}
\providecommand{\BIBforeignlanguage}[2]{{%
\expandafter\ifx\csname l@#1\endcsname\relax
\typeout{** WARNING: IEEEtran.bst: No hyphenation pattern has been}%
\typeout{** loaded for the language `#1'. Using the pattern for}%
\typeout{** the default language instead.}%
\else
\language=\csname l@#1\endcsname
\fi
#2}}
\providecommand{\BIBdecl}{\relax}
\BIBdecl

\bibitem{shor1999polynomial}
P.~Shor, ``Polynomial-time algorithms for prime factorization and discrete logarithms on a quantum computer,'' \emph{SIAM Review}, vol.~41, no.~2, pp. 303--332, 1999.

\bibitem{grover1996fast}
L.~K. Grover, ``A fast quantum mechanical algorithm for database search,'' in \emph{Proc. 28th ACM Symposium on the Theory of Computing}, 1996, pp. 212--219.

\bibitem{harrow2009quantum}
A.~W. Harrow, A.~Hassidim, and S.~Lloyd, ``Quantum algorithm for linear systems of equations,'' \emph{Physical Review Letters}, vol. 103, no.~15, p. 150502, 2009.

\bibitem{giannakis2022embedding}
D.~Giannakis, A.~Ourmazd, P.~Pfeffer, J.~Schumacher, and J.~Slawinska, ``Embedding classical dynamics in a quantum computer,'' \emph{Physical Review A}, vol. 105, p. 052404, 2022.

\bibitem{schalkers2022efficient}
M.~A. Schalkers and M.~M{\"o}llers, ``Efficient and fail-safe collisionless quantum {Boltzmann} method,'' \emph{arXiv:2211.14269}, 2022.

\bibitem{feynman1982simulating}
R.~P. Feynman, ``Simulating physics with computers,'' \emph{Int. J. Theor. Phys.}, vol.~21, p. 467, 1982.

\bibitem{nielsen2011quantum}
M.~A. Nielsen and I.~L. Chuang, \emph{Quantum Computation and Quantum Information: 10th Anniversary Edition}, 10th~ed.\hskip 1em plus 0.5em minus 0.4em\relax Cambridge University Press, New York, NY, USA, 2011.

\bibitem{preskill2018quantum}
J.~Preskill, ``Quantum computing in the {NISQ} era and beyond,'' \emph{Quantum}, vol.~2, p.~79, 2018.

\bibitem{cerezo2021variational}
M.~Cerezo, A.~Arrasmith, R.~Babbush, S.~C. Benjamin, S.~Endo, K.~Fujii, J.~R. McClean, K.~Mitarai, X.~Yuan, L.~Cincio, and P.~J. Coles, ``Variational quantum algorithms,'' \emph{Nature Reviews Physics}, vol.~3, pp. 625--644, 2021.

\bibitem{farhi2014quantum}
E.~Farhi, J.~Goldstone, and S.~Gutmann, ``A quantum approximate optimization algorithm,'' \emph{{arXiv:1411.4028}}, 2014.

\bibitem{berberich2023quantum}
J.~Berberich and D.~Fink, ``Quantum computing through the lens of control: a tutorial introduction,'' \emph{arXiv:2310.12571}, 2023.

\bibitem{inoue2020model}
D.~Inoue and H.~Yoshida, ``Model predictive control for finite input systems using the {D}-wave quantum annealer,'' \emph{arXiv:2001.01400}, 2020.

\bibitem{deshpande2022quantum}
S.~A. Deshpande and A.~A. Kulkarni, ``The quantum advantage in decentralized control,'' \emph{arXiv:2207.12075}, 2022.

\bibitem{poljak1993checking}
S.~Poljak and J.~Rohn, ``Checking robust nonsingularity is {NP}-hard,'' \emph{Mathematics of Control, Signals, and Systems}, vol.~6, pp. 1--9, 1993.

\bibitem{nemirovskii1993several}
A.~Nemirovskii, ``Several {NP}-hard problems arising in robust stability analysis,'' \emph{Math. Control Signals Systems}, vol.~6, pp. 99--105, 1993.

\bibitem{blondel1997np}
V.~Blondel and J.~N. Tsitsiklis, ``{NP}-hardness of some linear control design problems,'' \emph{SIAM J. Control Optim.}, vol.~35, no.~6, pp. 2118--2127, 1997.

\bibitem{deif1986sensitivity}
A.~Deif, \emph{Sensitivity analysis in linear systems}.\hskip 1em plus 0.5em minus 0.4em\relax Springer-Verlag, Berlin, 1986.

\bibitem{doyle1982analysis}
J.~C. Doyle, ``Analysis of feedback systems with structured uncertainties,'' in \emph{Proc. IEEE}, 1982, pp. 242--250.

\bibitem{zhou1996robust}
K.~Zhou, J.~C. Doyle, and K.~Glover, \emph{Robust and optimal control}.\hskip 1em plus 0.5em minus 0.4em\relax Prentice-Hall, Inc., Englewood Cliffs, N.J., 1996.

\bibitem{rohn1994checking}
J.~Rohn, ``Checking positive definiteness or stability of symmetric interval matrices is {NP}-hard,'' \emph{Comment. Math. Univ. Carolin.}, vol.~35, no.~4, pp. 795--797, 1994.

\bibitem{hadfield2021representation}
S.~Hadfield, ``On the representation of boolean and real functions as hamiltonians for quantum computing,'' \emph{ACM Transactions on Quantum Computing}, vol.~2, no.~4, p.~18, 2021.

\bibitem{bergholm2018pennylane}
{V. Bergholm, J. Izaac, M. Schuld, C. Gogolin, S. Ahmed, \textit{et al.}}, ``Pennylane: Automatic differentiation of hybrid quantum-classical computations,'' \emph{arXiv:1811.04968}, 2018.

\bibitem{meisami2008dissipative}
M.~{Meisami-Azad}, J.~Mohammadpour, and K.~M. Grigoriadis, ``Dissipative analysis and control of state-space symmetric systems,'' in \emph{Proc. American Control Conf. (ACC)}, 2008, pp. 413--418.

\end{thebibliography}

\end{document}